\documentclass[twocolumn,prl]{revtex4}
\usepackage{amsmath}
\usepackage{amsthm}
\usepackage{graphicx}

\newtheorem{lemma}{Lemma}

\newcommand{\ket}[1]{\left|{#1}\right\rangle}
\newcommand{\bra}[1]{\left\langle {#1}\right|}

\begin{document}

\title{Quantum communication beyond the localization length in disordered spin chains}

\author{Jonathan Allcock}
\email[Electronic address: ]{jon.allcock@bristol.ac.uk}

\author{Noah Linden}
\email[Electronic address: ]{n.linden@bristol.ac.uk}
\affiliation{Department of Mathematics, University of Bristol, University Walk, Bristol BS8 1TW, United Kingdom}

\date{January 27, 2008}

\begin{abstract}
We study the effects of localization on quantum state transfer in spin chains. We show how to
use quantum error correction and multiple parallel spin chains to send a qubit with high fidelity over arbitrary 
distances; in particular distances much greater than the localization length of the chain. 
\end{abstract}


\maketitle

The reliable communication of quantum states from one physical
location to another is most likely necessary if large-scale
quantum computing is ever to be realised. Several years ago Bose
\cite{Bos03} proposed using spin chains as the medium for such
quantum state transfer.  Since then, there has been much
interest in this area, and numerous protocols have been put
forward which develop this idea, improving the efficiency and
fidelity of transfer using methods such as encoding the
information over multiple qubits in the chain
\cite{OL04,Has05,BGB06, Bur07}, using engineered couplings
\cite{CDEL04,CDDEKL05}, multi-rail encodings
\cite{BB05a,SJBB07,BGB05} and local memory \cite{GB06}.

Unfortunately, any real spin chain will inevitably have an element
of disorder inherent in the system.  As pointed out in
\cite{KLMW06}, this can cause a phenomenon now known as Anderson localization \cite{And58} to take place.  This is the
process in which the energy eigenstates of a disordered lattice
become localized in space, rather than extending throughout
the system as they would in the absence of disorder. This is turn inhibits state transfer beyond a distance known as the localization length of the chain. While this may be useful for localizing qubits for quantum computing \cite{DSSI05}, it provides a major challenge for the use of spin chains in quantum communication. Although localization has been extensively studied in solid state physics, its implications for quantum information are not well understood \cite{KLMW06},\cite{Bur07},\cite{BB05b,CRMF05, BO07}. 

Indeed, it is not clear at first glance how the problems due to
localization fit within the standard error paradigm
considered in quantum information theory. Firstly, rather than
being due to some coupling of the spin chain with the
environment, localization is an intrinsic source of error. Even in the absence of disorder, the excitations carrying the quantum
information become spread out along the chain.  In the presence of
disorder, only an exponentially small part of the signal reaches
beyond the localization length.   Also, although we are only trying to communicate a single
qubit, the localization errors take place in the
higher-dimensional Hilbert space of the entire spin chain.

In this Letter we look more closely at the effect localization has
on spin chain state transfer and find that, for a class of
standard spin chain protocols, localization can effectively be
viewed as a source of amplitude damping errors, where the damping
parameter is dependent on the distance propagated and the degree
of disorder in the chain.  We then show how to use multiple
spin chains and quantum error correction to achieve high fidelity
quantum information transfer over arbitrary distances; in
particular over distances much greater than the localization
length. By considering a concatenation scheme we show that,
provided the disorder is not too great, the number of spin chains
required scales only polylogarithmically with the distance over
which we wish to communicate.

In what follows we shall use the following notation and
conventions: A spin in the $\ket{1}$ state will be called
 an \textit{excitation}. For a system of $N$ spins, we shall
use a bold font $\ket{\textbf{0}}= \ket{00...0}$ to denote the
zero excitation state, and $\ket{\textbf{j}}= \ket{00...010...0}$
to denote the single excitation state with the $j^{\text{th}}$
spin in the state $\ket{1}$ and all the other spins in state
$\ket{0}$. Finally, $\vec{\sigma_{i}} =
(\sigma_{i}^{x},\sigma_{i}^{y},\sigma_{i}^{z})$, where
$\sigma_{i}^{x,y,z}$ are the Pauli matrices acting on the
$i^{\text{th}}$ spin in the chain.

Let us consider the following communication scenario. Alice and
Bob are at opposite ends of a chain of $N$ spin-$1/2$ particles
described by some nearest-neighbour Hamiltonian $H$. Alice's task
is to send, with as high a fidelity as possible, an unknown qubit
state to Bob.  We shall assume that the following conditions hold:
$(\text{C1})$: The system is isolated from the environment, and
thus there are no external sources of noise. $(\text{C2})$: $H$
commutes with the total $Z$-spin operator
$\sum_{k}^{N}\sigma_{k}^{z}$ and hence conserves the number of
excitations on the chain. $(\text{C3})$: The system starts in the
initial state $\ket{\textbf{0}}$. For example, $H$ could be a
simple Heisenberg coupling (in the absence of an external field)
\begin{equation}\label{e:hamil}
H   =  -(J/2)\sum_{<i,j>}\vec{\sigma_{i}}\cdot\vec{\sigma_{j}}
\end{equation}
where $\sum_{<i,j>}$ denotes that the sum is over all adjacent spins $i$ and $j$, and $J$ is the coupling constant between spins. Note that we have not yet introduced any disorder into the system.

The communication proceeds according to \cite{OL04} and
\cite{Has05}. We assume that Alice and Bob each have access to a
number $N_A,\ N_B$ respectively, of spins at their ends of
the chain. To begin, Alice encodes the input state $a\ket{0}
+ b\ket{1}$ as a state of her $N_A$ spins in such a way that
the following two conditions hold.  $(\text{C4})$: $\ket{0}$ is
encoded as $\ket{\textbf{0}}_{A}$. $(\text{C5})$: $\ket{1}$ is
encoded as $\ket{1_{\text{ENC}}}_{A}$. Here $A$ denotes Alice's
addressable spins, $\overline{A}$ denotes all other spins, and
$\ket{1_{\text{ENC}}}_{A}$ lies in the linear span of
$\ket{\textbf{j}}_{A}$, i.e. it is a superposition of single
excitation states, where the excitations lie in Alice's domain.
The system then undergoes unitary time evolution for some time
$t$. Note that as a result of Alice's encoding, and the fact that
the Hamiltonian preserves the total number of excitations, the
chain dynamics remain restricted to the $N+1$ dimensional subspace
spanned by the zero and single excitation states. Finally, Bob
applies a decoding unitary to his addressable spins. This
concentrates the state onto a single spin which he then takes as
the output of the transfer. The whole process is equivalent to
sending Alice's original state down an amplitude damping channel
with time-dependent channel parameter $\gamma(t) =
1-\mathcal{C}_{B}(t)$, where $0 \leq C_{B}\left(t\right)\leq 1$.
The corresponding average fidelity is given by $1/2 +
\sqrt{C_{B}\left(t\right)}/3 + C_{B}\left(t\right)/6$.

There is a convenient way to visualise the transfer
process \cite{OL04}.   At time $t$ the state of the system can be
expressed as $a\ket{\textbf{0}} + b\sum_{j =
1}^{N}c_{j}(t)\ket{\textbf{0}}$, where $\sum_{j =
1}^{N}\left|c_{j}(t)\right|^{2} = 1$.  By plotting the quantities
$\left|c_{j}(t)\right|^{2}$ against site number $j$ we can produce
a graph of the state.  Then, the quantity $\mathcal{C}_{B}(t)$ is
given by the area under the graph supported by Bob's accessible
sites. To achieve high fidelity transfer, the strategy is to
choose $\ket{1_{\text{ENC}}}_{A}$ to be a wavepacket with a
particular shape that leads to minimal dispersion.

What happens now if there is disorder present in the chain? Any real chain will, due to engineering limitations and thermal fluctuations, have spin-spin couplings that are not described by an ideal Hamiltonian such as \eqref{e:hamil}. In general there may be complicated time-dependent perturbations to the Hamiltonian, possibly coupling the zero and single excitation subspaces with subspaces of larger numbers of excitations.  However, as in \cite{KLMW06}, here we consider a simpler model of randomness where the total Hamiltonian $H_{\epsilon}$ is given by
\begin{equation}\label{e:perturbed}
H_{\epsilon} = H + \sum_{j}\epsilon_{j}\ket{\textbf{j}}\bra{\textbf{j}}
\end{equation}
where $H$ is the Hamiltonian \eqref{e:hamil} and the $\epsilon_{j}$ are i.i.d real random variables drawn from some underlying distribution $\mathcal{P}_{\delta}$ with bounded density, characterised by a disorder parameter $\delta$ (for instance, the uniform distribution in the interval $\left[-\delta,\delta\right]$).

What are the implications of this for state transfer? In spite of
the disorder, the all zero state $\ket{\textbf{0}}$ trivially
remains an eigenstate of $H_{\epsilon}$ and, furthermore, the
chain evolution remains restricted to the zero and single
excitation subspaces.  This implies that the disordered chain
still behaves like an amplitude damping channel, where the damping
parameter depends on the area under the graph in Bob's domain. Fig
\ref{fig:locprop} shows how increasing the degree of disorder causes
the graph of the state to suffer from dispersion and reflection as
it propagates.  Consequently, the area under the graph in Bob's
domain, and hence the transfer fidelity, becomes increasingly
suppressed.
\begin{figure}
        \includegraphics[scale = 0.5]{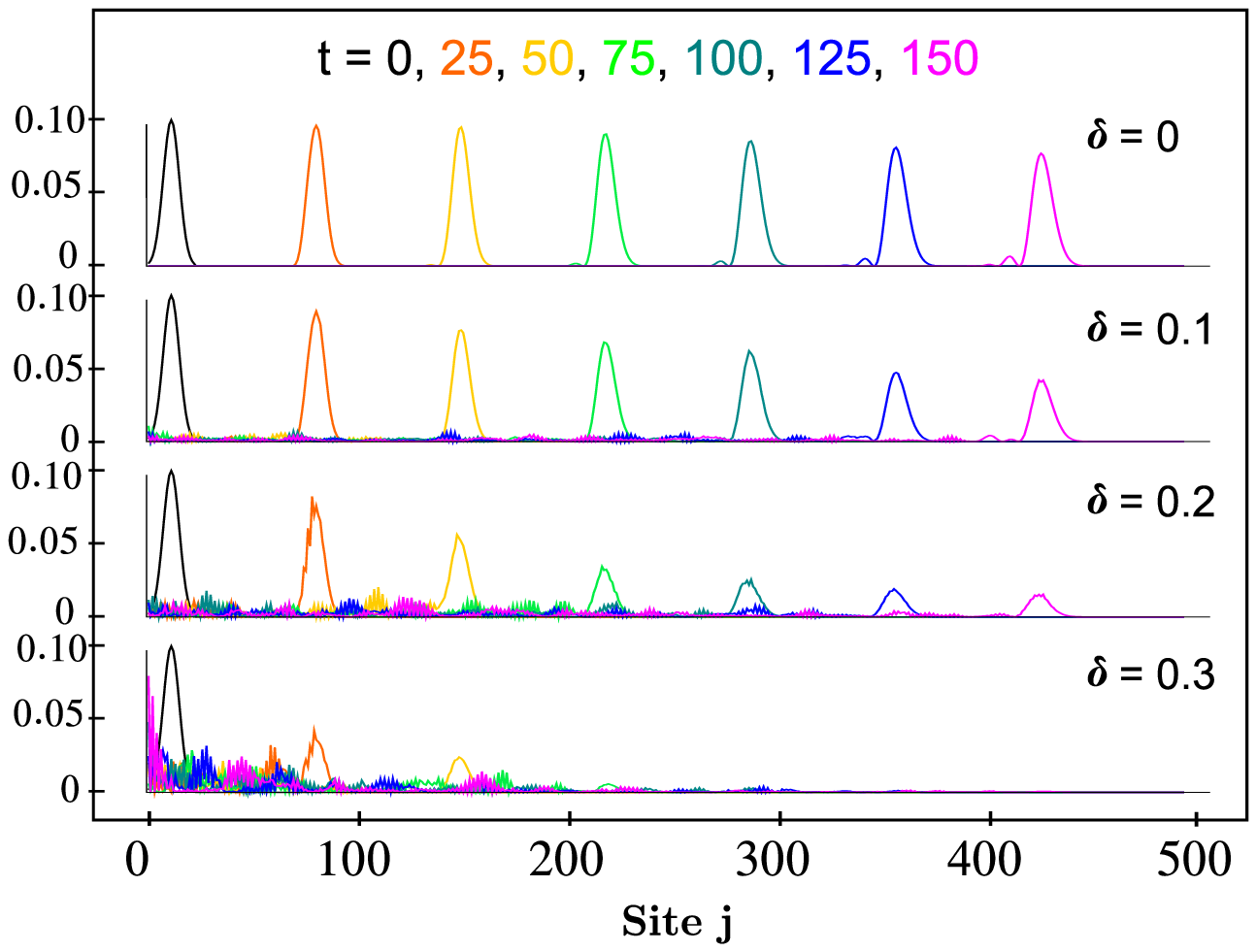}
        \put(-205,70){$\left|c_{j}\right|^2$}
    \caption{Effect of localization on state propagation for a disordered
    Heisenberg chain with $N = 501$, $J = 1/\sqrt{2}$. The wavepackets have been plotted at time increments $\delta t = 25$.
    Diagonal disorder was drawn from a normal distribution with mean zero
    and standard deviation $\delta$.  From top to bottom, $\delta = 0$, $0.1$, $0.2$ and $0.3$.}\label{fig:locprop}
\end{figure}
With a disordered chain, the channel parameter $
\mathcal{C}_{B,\delta}(t)$ depends on both the time and the
particular realization of the disorder (that is, the specific
values of the $\epsilon_{j}$ in \eqref{e:perturbed}). However,
for a given $\mathcal{P}_{\delta}$, the specific values of
$\epsilon_{j}$ can only be determined probabilistically. In other
words, $\gamma \equiv \gamma_{\delta}(t) = 1 -
\mathcal{C}_{B,\delta}(t)$ is a stochastic function of time,
parameterized by $\delta$.

We have investigated this claim in more depth by numerically
evaluating $\overline{\gamma_{\delta}}\left(t\right)$ for various
values of $t$ and $\delta$, with $\mathcal{P}_{\delta}$ chosen to
be the uniform distribution. Since the outcome is stochastic, we
average over many trials in order to build up a mean surface (see
Fig~\ref{fig:multitrials}). This was found to have the empirical form
\begin{equation}\label{e:emp}
    \overline{\gamma_{\delta}}\left(t\right) = 1 - e^{-\alpha t\left(\delta^{2} + \beta\delta\right)}
\end{equation}
for some constants $\alpha$ and $\beta$. The same empirical form \eqref{e:emp} was found to hold for
$\mathcal{P}_{\delta}$ chosen to be Cauchy and normal
distributions, although different values of $\alpha$ and $\beta$
were observed. Thus, if we know the value of $\delta$, we can
deduce how far Bob can be from Alice before, on average, the
fidelity drops below a certain threshold.  As expected, for fixed
$\delta$ the fidelity decays exponentially quickly as the state
propagates along the chain.

\begin{figure}
        \includegraphics[scale = 0.5]{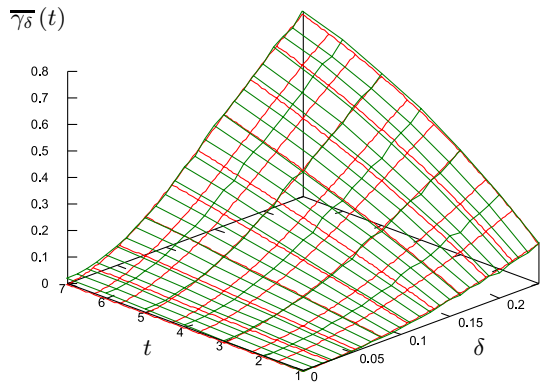}
        \put(-25,10){$\delta$}
        \put(-150,10){$t$}
        \put(-200,135){$\overline{\gamma_{\delta}}\left(t\right)$}
        \caption{Mean surface of $\gamma\left(t,\delta\right)$. Red: numerical data. Green: $1 - e^{-\alpha
        t\left(\delta^{2} + \beta\delta\right)}$, for $\alpha = 2.56$, $\beta = 0.029$.
        Diagonal disorder drawn from a uniform distribution in the
        range $\left[-\delta, \delta\right]$ .}\label{fig:multitrials}
\end{figure}

The identification of a disordered spin chain as an amplitude
damping channel (albeit one with a stochastic damping parameter)
immediately opens up the possibility of using quantum error
correction to improve the channel fidelity. Alice can encode her message state into a state of multiple qubits and
send each of these qubits down a separate, parallel, spin chain. 
Bob will then receive the multiple qubits and apply
standard error correction techniques.  Indeed, quantum codes
specifically tailored to treating amplitude damping errors are
already known \cite{LNCY97, FSW07}. However, if Alice and
Bob are separated by a distance larger than the localization
length, the amplitude damping will become too severe and this
error correction protocol will fail \footnote{In one dimension, for any
non-zero degree of disorder, the localization length is finite
with probability one.}.

However we have the option of performing error correction at regular intervals (shorter than the
localization length) along the chain, intervening before the
amplitude damping becomes too great. Thus as long as $\delta$ is not too large,  error correction can
be used to correct localization errors, and achieve high fidelity
state transfer over distances much larger than the localization
length. Furthermore, since the wavepackets propagate with a well-defined group velocity, the transmission takes a time linear in the 
distance over which we want to propagate.  

The key result of this Letter is that this scheme is scalable.
That is, the number of parallel chains needed to faithfully
communicate a qubit grows favourably with the distance we wish to
send it.   We show this by considering the following protocol. Alice encodes her initial qubit in the space of $n = 5^{k}$ qubits
according to the 5 qubit code \cite{BDSW96,LMPZ96} concatenated
$k$ times. Each qubit is then sent down a separate spin chain  -
modified via a channel twirling process \cite{HHH99} - encoded
over Alice's $N_A$ sites as a minimally-dispersive wavepacket in
accordance with \cite{OL04} and \cite{Has05}. (The twirling is applied
to ensure that the concatenation preserves the structure of the
channel.) At periodic intervals of distance $L$ we will error
correct, until the wavepackets reach Bob's end of the chain.
Bob then decodes each of the (distorted) wavepackets back down to the
space of single qubits, before finally decoding these $5^{k}$
qubits down to the space of a single qubit.

Provided that the number of parallel chains $n$ scales
polylogarithmically in the distance we wish to communicate over,
we show that it is possible to send a qubit an arbitrary distance,
with arbitrarily high fidelity using this protocol. More
specifically, let $L$ be a non-zero length chosen so that after
twirling, each length $L$ of the chain has the depolarizing parameter $p$. (For
the time being we will consider, for convenience, that each channel has
the same value of $p$; we will discuss at the end of the Letter how to relax this). Let $m$ be a positive integer, and let
$\epsilon$ be a positive constant in the interval
$\left(0,1\right]$. Then it is possible to send a qubit, a
distance $mL$,  with fidelity greater than $1-\epsilon/2$, using a
number of chains $n$ polylogarithmic in $m$:
\begin{equation}\label{e:polylogbound}
n=5^k > \left(\ln\frac{m}{\epsilon}\right)^3
\left(-\ln\frac{15p}{2}\right)^{-3}.
\end{equation}

To show this, we need the following facts regarding quantum channels:

\begin{lemma}\label{L:twirl} \textbf{Channel Twirling} \cite{HHH99}. Any channel $\Lambda$ can be turned into a depolarizing channel with the same entanglement fidelity F and average channel fidelity f as $\Lambda$.
\end{lemma}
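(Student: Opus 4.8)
The plan is to construct the required depolarizing channel explicitly by \emph{twirling} $\Lambda$ over the unitary group, and then to verify that this averaging leaves both fidelities untouched. Define the twirled channel by
\begin{equation}
\Lambda_{T}(\rho) = \int dU\, U^{\dagger}\,\Lambda\!\left(U\rho U^{\dagger}\right)U,
\end{equation}
where $dU$ is the normalized Haar measure on the unitary group acting on the $d$-dimensional system. Each map $\rho \mapsto U^{\dagger}\Lambda(U\rho U^{\dagger})U$ is completely positive and trace preserving, and the convex combination of such maps is again such a map, so $\Lambda_{T}$ is a legitimate channel.

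First I would establish that $\Lambda_{T}$ is \emph{covariant}, i.e.\ $\Lambda_{T}(V\rho V^{\dagger}) = V\Lambda_{T}(\rho)V^{\dagger}$ for every unitary $V$. This is immediate from the invariance of the Haar measure: the substitution $U \mapsto UV$ absorbs the conjugation by $V$ into the integration variable without changing the measure. The key structural step is then to argue that any covariant channel must be depolarizing. Passing to the Choi operator $J = (\mathrm{id}\otimes\Lambda_{T})(\ket{\Phi}\bra{\Phi})$, covariance translates into the statement that $J$ commutes with every operator of the form $\bar{U}\otimes U$. By Schur's lemma the commutant of this representation is spanned by the identity and the projector $\ket{\Phi}\bra{\Phi}$ onto the maximally entangled state, so $J$ is a combination of these two; imposing trace preservation fixes the coefficients and yields exactly the depolarizing channel $\Lambda_{T}(\rho) = p\,\rho + (1-p)\,I/d$ for a single parameter $p$.

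It then remains to check that twirling preserves the fidelities. Writing the entanglement fidelity as the overlap $F = \bra{\Phi}(\mathrm{id}\otimes\Lambda)(\ket{\Phi}\bra{\Phi})\ket{\Phi}$ with the maximally entangled state $\ket{\Phi}$, I would use the transpose (``ricochet'') identity $(I\otimes U)\ket{\Phi} = (U^{T}\otimes I)\ket{\Phi}$ to show that the conjugated channel $U^{\dagger}\Lambda(U\cdot U^{\dagger})U$ has \emph{exactly the same} entanglement fidelity as $\Lambda$, for every fixed $U$: the factors of $U$ can be shuffled onto the reference system and cancelled against their conjugates in $\ket{\Phi}$. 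Since $\Lambda_{T}$ is the convex average of these channels and $F$ is linear in the channel, $F$ is unchanged by the twirl. The average channel fidelity $f$ is an affine function of $F$ through the standard relation $f = (dF+1)/(d+1)$, so it is preserved as well.

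The hard part is the representation-theoretic step showing that covariance forces the depolarizing form; this is where Schur's lemma enters, via the decomposition of the Choi operator into its components on the maximally entangled state and its orthogonal complement. Everything else — the convexity argument making $\Lambda_{T}$ a channel, the Haar-invariance giving covariance, and the ricochet identity giving fidelity invariance — is routine once this structural fact is in place.
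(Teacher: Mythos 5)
Your proof is correct, and it is the standard argument. The paper itself offers no proof of this lemma --- it is stated as a known result imported directly from \cite{HHH99} --- so the only meaningful comparison is with that reference, and your chain of steps (Haar twirl, covariance by translation invariance of the measure, Schur's lemma applied to the Choi operator to force the depolarizing form, the ricochet identity $(I\otimes U)\ket{\Phi} = (U^{T}\otimes I)\ket{\Phi}$ to show each conjugated channel has the same entanglement fidelity, and the affine relation $f = (dF+1)/(d+1)$) is precisely the reasoning underlying the cited result. One cosmetic caution: your parameter convention $\Lambda_{T}(\rho) = p\,\rho + (1-p)I/d$ is the reverse of the paper's, which takes a depolarizing channel with parameter $p$ to leave the state untouched with probability $1-p$; this matters only when plugging into Lemma~\ref{L:concatenate}.
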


A \textit{depolarizing channel with parameter p} is a quantum channel which, with probability $(1-p)$ leaves an input state $\rho$ untouched, and with probability $p$ replaces $\rho$ with the maximally mixed state. 
\begin{lemma}\label{L:concatenate}
Consider the following procedure:
(a) Encoding a single qubit in the space of $5^{k}$ qubits according to the 5 qubit code concatenated $k$ times.
(b) Sending each qubit down an identical depolarizing channel with parameter $p$.
(c) Error correcting and decoding back to the space of a single qubit.
This procedure is equivalent to sending the qubit down a depolarizing channel with parameter $p_{k} < \frac{2}{15}\left(\frac{15}{2}p\right)^{2^{k}}$.
\end{lemma}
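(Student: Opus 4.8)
The plan is to prove the bound by induction on the number of concatenation levels $k$, with the single-level analysis (the $k=1$ case) doing all the real work and the concatenation structure reducing the general case to iterating it. First I would record the Pauli-mixture form of the physical channel: a depolarizing channel with parameter $p$ acts as the identity with probability $1-3p/4$ and applies one of $X,Y,Z$ each with probability $p/4$, so the probability that any given one of the five qubits suffers a nontrivial Pauli is $3p/4$. Sending the five encoded qubits through independent such channels therefore amounts to applying a random Pauli to the five-qubit block, and the syndrome depends only on which Paulis were applied.

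For the single level I would argue in two steps. First, that the effective logical channel is again \emph{depolarizing}: the physical noise is symmetric under $X\leftrightarrow Y\leftrightarrow Z$, and together with the corresponding axis-permuting symmetry of the five-qubit code (or, failing that, the twirl of Lemma~\ref{L:twirl} that the protocol inserts precisely to ``preserve the structure of the channel'') this forces the three logical Pauli rates to be equal, so the output is depolarizing with some parameter $p_1$. Second, I would bound $p_1$. Since the code corrects every weight-one Pauli, a logical error can occur only if at least two of the five qubits are hit; a union bound over the $\binom{5}{2}=10$ pairs gives a logical-error probability $q_L\le\binom{5}{2}(3p/4)^2=\tfrac{45}{8}p^2$, strictly so for $p>0$ since weight-$\ge 3$ events are multiply counted. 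Converting $q_L$ into the depolarizing parameter via $q_L=3p_1/4$ (the combined weight of the three nontrivial logical Paulis) then yields $p_1=\tfrac43 q_L<\tfrac43\cdot\tfrac{45}{8}p^2=\tfrac{15}{2}p^2=\tfrac{2}{15}(\tfrac{15}{2}p)^2$, which is exactly the claimed bound at $k=1$.

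With the per-level bound $p'<\tfrac{15}{2}p^2$ in hand the induction closes immediately. In a concatenated code one error-corrects from the bottom level up, so each level-$(k-1)$ block presents to the next level a depolarizing channel of parameter $p_{k-1}$, and the single-level result gives $p_k<\tfrac{15}{2}p_{k-1}^2$. Assuming $p_{k-1}<\tfrac{2}{15}(\tfrac{15}{2}p)^{2^{k-1}}$, I substitute and simplify using $\tfrac{15}{2}\cdot(\tfrac{2}{15})^2=\tfrac{2}{15}$ and $\bigl((\tfrac{15}{2}p)^{2^{k-1}}\bigr)^2=(\tfrac{15}{2}p)^{2^k}$ to obtain $p_k<\tfrac{2}{15}(\tfrac{15}{2}p)^{2^k}$, completing the induction.

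I expect the only genuine obstacle to be the first step of the single-level argument: justifying that the output is \emph{depolarizing} rather than merely a general Pauli channel. If it were only Pauli, the recursion would carry three independent parameters and would not collapse to the clean single-parameter form above; this is exactly why the protocol twirls at each stage, so I would either invoke that twirl (Lemma~\ref{L:twirl}, which leaves the entanglement fidelity, and hence the error-correcting power, unchanged) or appeal to the $X_L\to Y_L\to Z_L$ symmetry of the five-qubit code. The remaining ingredients --- the Pauli-weight counting, the union bound, and the $q_L=3p_1/4$ conversion --- are routine.
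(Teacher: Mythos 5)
Your proof is correct, but it reaches the single-level bound --- which is where all the content of the lemma lies --- by a genuinely different route from the paper. The paper's proof is computational: it asserts that explicit calculation of (encode) $\circ$ (five independent depolarizing channels) $\circ$ (recover and decode) yields an exactly depolarizing logical channel with parameter $\frac{15}{2}p^{2}-\frac{25}{2}p^{3}+\frac{15}{2}p^{4}-\frac{3}{2}p^{5}$, observes that this is monotonically non-decreasing on $[0,1]$ and strictly below $\frac{15}{2}p^{2}$ for $p\neq 0$, and then concatenates by the same induction you give. You instead obtain the two needed facts structurally: the depolarizing form follows because a stabilizer code under Pauli noise with Pauli recovery gives a Pauli logical channel, and the transversal Clifford symmetry of the five-qubit code that cyclically permutes $X\to Y\to Z\to X$ (under which single-qubit depolarizing noise is invariant) forces the three logical rates to coincide; the coefficient follows from the union bound $q_{L}\le\binom{5}{2}\left(3p/4\right)^{2}=\frac{45}{8}p^{2}$ (strict for $p>0$) together with the conversion $p_{1}=\frac{4}{3}q_{L}$, giving $p_{1}<\frac{15}{2}p^{2}$. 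This buys you several things: no enumeration of error patterns, a transparent origin for the constant $\frac{15}{2}=\frac{4}{3}\binom{5}{2}\left(\frac{3}{4}\right)^{2}$, applicability to other distance-3 codes with an axis-permuting transversal symmetry, and independence from the paper's monotonicity observation (your induction only needs monotonicity of $q\mapsto\frac{15}{2}q^{2}$, which is why your substitution step closes cleanly). Incidentally, your bound is tight at second order: for the perfect five-qubit code every weight-two Pauli error produces a logical error, which is why the paper's exact polynomial has leading term exactly $\frac{15}{2}p^{2}$. What the paper's calculation buys in exchange is the exact polynomial (hence exact behaviour for all $p$) and freedom from the one point you should still nail down: the covariance of the recovery map under the transversal Clifford, which holds for the canonical perfect-code recovery because conjugation permutes the weight-one errors and their syndromes consistently. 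Note also that your fallback of invoking Lemma~\ref{L:twirl} between levels proves a slightly different statement (it inserts twirls, modifying the procedure described in the lemma), so the symmetry argument, not the twirl, is what establishes the lemma exactly as written.
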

\begin{proof} Consider the simpler procedure:
(a) Encoding a single qubit in the space of 5 qubits according to the 5 qubit code.
(b) Sending each qubit down an identical depolarizing channel with parameter $p$.
(c) Error correcting and decoding back to the space of a single qubit.
Explicit calculation shows that this procedure is equivalent to sending the single qubit down a depolarizing channel with parameter $\frac{15}{2}p^{2} - \frac{25}{2}p^{3} + \frac{15}{2}p^{4} -\frac{3}{2}p^{5}$, which, in our range of interest $0 \leq p \leq 1$,  is monotonically non-decreasing in $p$ and (for $p$ not zero) strictly less than $\frac{15}{2}p^{2}$.  The effect of concatenating our quantum code then follows by induction. 
\end{proof}
\begin{lemma}\label{L:compose}
Composing $m$ identical depolarizing channels, each with channel
parameter $p$, gives a depolarizing channel with parameter
$1-(1-p)^{m}$.
\end{lemma}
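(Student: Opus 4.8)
The plan is to work directly from the definition of a depolarizing channel given above. Writing the channel with parameter $p$ as the linear map $\Lambda_{p}(\rho) = (1-p)\rho + p\,(I/2)$ on a single qubit, the essential observation is that the maximally mixed state $I/2$ is a fixed point of every such channel: once the input has been replaced by $I/2$, any further depolarizing leaves it unchanged. Consequently only the ``untouched'' branch, of probability $(1-p)$, survives intact through a composition, and I expect these survival factors simply to multiply.

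First I would verify this for the composition of two channels, with parameters $p_{1}$ and $p_{2}$. Substituting $\Lambda_{p_{1}}(\rho)$ into $\Lambda_{p_{2}}$ and using linearity gives a term $(1-p_{2})(1-p_{1})\rho$ together with a term proportional to $I/2$; collecting the coefficient of $I/2$ and using the identity $(1-p_{2})p_{1} + p_{2} = 1 - (1-p_{1})(1-p_{2})$ shows that the result is exactly a depolarizing channel with parameter $1 - (1-p_{1})(1-p_{2})$. This single computation establishes both that the composition stays within the depolarizing family and the correct composition law for the parameters.

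Setting $p_{1}=p_{2}=p$ gives parameter $1-(1-p)^{2}$, and the general claim then follows by a one-line induction on $m$: assuming the composition of $m-1$ identical copies is depolarizing with parameter $1-(1-p)^{m-1}$, composing one further copy with parameter $p$ yields $1 - (1-p)^{m-1}(1-p) = 1-(1-p)^{m}$, as required.

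I do not anticipate a genuine obstacle here, since the lemma is essentially the statement that survival probabilities multiply. The only point requiring care is confirming that the two-channel composition remains inside the depolarizing family rather than producing a more general Pauli channel, and the explicit coefficient computation in the second step settles exactly this.
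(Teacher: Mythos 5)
Your proof is correct. The paper states this lemma without proof, treating it as an elementary known fact, and your argument --- writing $\Lambda_{p}(\rho) = (1-p)\rho + p\,I/2$, observing that $I/2$ is a fixed point so that composing two depolarizing channels gives a depolarizing channel with parameter $1-(1-p_{1})(1-p_{2})$, then inducting on $m$ --- supplies exactly the standard computation the paper implicitly relies on.
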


Suppose we have at our disposal many quantum channels, each with
physical length $L$.  Let us call these \textit{basic channels.}
By lemma \ref{L:twirl} we can assume, without loss of generality,
that these are depolarizing channels with parameter $p$. Now
consider combining $5^{k}$ of these channels in parallel to form a
block of length $L$ and \textit{depth} (i.e. number of channels in
parallel) $5^{k}$. By lemma \ref{L:concatenate} we can use error
correction to send a qubit down this block, and the net effect is
equivalent to sending the qubit down a depolarizing channel with
parameter $p_{k}$.  Let us call this resulting channel a
\textit{k-block}, in view of the fact that $k$ levels of
concatenation are involved. If we now compose $m$ of these blocks
together to form a channel of depth $5^{k}$ and total length $mL$,
lemma \ref{L:compose} tells us that this is equivalent to a
depolarizing channel with parameter
\begin{equation}
    p_{\text{total}} < 1 - \left[1-\frac{2}{15}\left(\frac{15}{2}p\right)^{2^{k}}\right]^{m}. \label{E:ptotal}
\end{equation}

Now, (\ref{e:polylogbound}) means that
\begin{eqnarray}
5^k &>& \left(\frac {\ln(\frac{2}{15}(1+\frac{m}{\epsilon}))}
{-\ln{\frac{15p}{2}}}\right)^{\log_2 5}\nonumber\\
\Rightarrow 2^k  &>&
\left(
\frac
{\ln(\frac{15}{2}(1-e^{-\epsilon/m})}
{\ln \frac{15p}{2}}
\right).\label{e:first-ineq}
\end{eqnarray}
\begin{eqnarray}
\Rightarrow \frac{2}{15}\left(\frac{15}{2}p\right)^{2^{k}} &<& 1-e^{-\epsilon/m}
\nonumber\\
\Rightarrow p_{\text{total}} &<& 1-e^{-\epsilon}\nonumber\\
&<& \epsilon. \label{e:second-ineq}
\end{eqnarray}
We have used the  inequalities  $1-e^{-x}> \frac{x}{1+x}$ for $x>0$ and $1-e^{-x}< x$ for $x>0$ in (\ref{e:first-ineq}) and (\ref{e:second-ineq}) respectively.

So provided that the parameter $p$ of the fundamental depolarizing
channels is less than $2/15$, we are free to choose any $m$,
arbitrarily large, and any $\epsilon$, however close to zero, and
the overall channel will have parameter
$p_{\text{total}}<\epsilon$ as long as  the channel depth
satisfies (\ref{e:polylogbound}). The average fidelity of sending a qubit down this channel is then
$f > 1 - \epsilon/2$.

We can now directly apply this result to the case where our basic channels are the sections of the disordered spin chains. As mentioned before, each chain acts like an amplitude damping channel with identical parameter $\gamma$. If we now apply channel twirling to our chains they become depolarizing channels with parameter $p = \frac{1}{3}\left[\gamma + 2\left(1-\sqrt{1-\gamma}\right)\right]$. We require that $p < 2/15$ so, given a degree of disorder $\delta$, we choose a value of $L$ such that this is true, at least with high probability.  In other words, the degree of disorder fixes the maximum possible length of our basic channels, or equivalently, the length before which we must error correct.  We can now form $k$-blocks from our twirled chains, and compose $m$ of these blocks together to form a channel of total length $mL$.  It then follows that it is possible, using a number of disordered chains polylogarithmic in $m/\epsilon$, to send a qubit a distance of $mL$ with overall fidelity greater than $1-\epsilon/2$.


We have shown that for a class of spin chain protocols, namely
those that satisfy conditions $(\text{C1})$-$(\text{C5})$, that
localization can be viewed as a source of amplitude damping
errors. Although we have focused here on the case of diagonal
disorder, the same holds true for more general models of disorder,
provided that the disorder does not couple together subspaces of
different numbers of excitations. Indeed it will be realized
that our protocol can be used to deal with rather more general
errors arising in quantum communication in spin chains and not
just those arising from localization.

In the text we have considered that each channel has an identical
error parameter. This is, of course, an unreasonable assumption
since the errors are stochastic.  However it is not too difficult
to see that the protocol still leads to polylogarithmic scaling
if, for example, the channels are guaranteed to have error parameter
$p$ below some given threshold. It would be an interesting question for the future to analyse how well the protocol succeeds if one only knows, for example, the distribution of possible channel parameters..   

It is worth making explicit the nature of the interventions needed
along the chain in order to perform our protocol.  The syndrome
measurements may be done coherently, rather than by performing von
Neumann measurements.  In other words, the individual chains can have
(passive) unitary interactions between them that perform the error
correction.  This requires clean qubits -- sources of low entropy -- to
be coupled into the chains at regular intervals.

Finally we observe that our results might be interesting in the
context of solid state physics.  We have shown that parallel
disordered one-dimensional spin chains can support high fidelity
\lq\lq conduction\rq\rq\ of quantum information over arbitrary
distances; the number of required chains scaling only
polylogarithmically with the distance.  This is perhaps at odds
with the intuition one might have from the well-known fact that in
one dimension, disorder inevitably prevents propagation.  Of
course there is no true contradiction here. For example, our system is not
strictly one dimensional; the error correction leads to subtle
coupling of the one-dimensional chains. However we believe that
our techniques from quantum information may offer new insights
into localization in solid state systems. 

We would like to thank Toby Cubitt and Andreas Winter for many
helpful discussions.  J.A. also gratefully acknowledges the
support of the Dorothy Hodgkin Foundation.

\end{document}